\documentclass[12pt]{article}
\newtheorem{theorem}{Theorem}
\newtheorem{remark}{Remark}

\usepackage[utf8]{inputenc}
\usepackage[english]{babel}
\usepackage[centertags]{amsmath}
\usepackage{amssymb}
\usepackage{bm}
\usepackage{bbm}
\usepackage{amsfonts} 
\usepackage{amssymb}
\usepackage{amsbsy}
\usepackage{graphicx}
\usepackage{appendix}
\usepackage{mathrsfs}
\usepackage{epsfig}
\usepackage{color}
\usepackage{euscript}
\usepackage{ulem}
\usepackage{multirow}
\usepackage{cite}

\definecolor{dgreen}{rgb}{0,0.6,0}

\definecolor{darkblue}{rgb}{0., 0, 1}

\definecolor{purple}{rgb}{0.65,0.,0.78}

\definecolor{orange}{rgb}{0.89,0.42,0.05}

\usepackage{jheppubm}

\newcommand{\be}{\begin{equation}}
\newcommand{\ee}{\end{equation}}
\newcommand{\bea}{\begin{eqnarray}}
\newcommand{\eea}{\end{eqnarray}}

\newcommand{\cL}{{\cal L}}

\newcommand{\cK}{{\cal K}}

\numberwithin{equation}{section}

\title{Scattering and Tunneling in Real  Quantum Mechanics }

\author{Firdevs Karakus$^a$, Daniil Stepanenko$^b$ and Igor Volovich$^b$ }

\affiliation{
$^a$Department of Physics, Boğaziçi University, \\
34342 Bebek, Istanbul, Türkiye \\
$^b$Steklov Mathematical Institute, Russian Academy of  Sciences, \\ Gubkina str. 8, 119991, Moscow, Russia

}

\emailAdd{firdevs.karakus@std.boun.edu.tr, dstepanenko@mi-ras.ru, volovich@mi-ras.ru}

\abstract{We establish the general form of the evolution equation and study scattering and tunneling in real quantum mechanics. We prove an analogue of Stone's theorem for a real Kähler space and derive the corresponding general evolution equation, whose evolution operator is both orthogonal and symplectic. We formulate scattering theory in terms of real wave operators and the associated 
S-matrix. For the class of potentials considered, we show that the differential scattering cross section is equivalent to that obtained in standard complex quantum mechanics. We also show that the tunneling probability in real quantum mechanics is identical to that in complex quantum mechanics.
}

\begin{document}
\newpage
%%%%%%%%

%%%%%%%%
\newpage
\maketitle
%%%%%%%%%%%%%%%%%%%%%%%%%%%%%%%%%%%%%%%%%%%%%%%%%%%%%%BOLD BOLD BOLD BOLD BOLD BOLDBOLD
 %\bf\boldmath\Large
%%%%%%%%%%%%%%%%%%%%%%%%%%%%%%%%%%%%%%%%%%%%%%%%%%%%%%BOLD BOLD BOLD BOLD BOLD BOLDBOLD
\section{Introduction}
%%%%%%%%%%%%%%%%%%%%%%%%%%%%%%%%%%%%%%%%%%%%%%%%%%%%%%

In this paper, we consider the general form of the evolution equation and scattering theory in real quantum mechanics, with quantum tunneling as an illustrative example.

Standard quantum mechanics is formulated over the complex numbers on a complex Hilbert space. The possibility of a purely real formulation has been debated since the early days of the theory \cite{BirkhoffVonNeumann:1936logic,Varadarajan:1985gqt,vonNeumann:2018mfqm}. In our previous work \cite{Volovich:2025rmi,Arefeva:2025zbx}, we demonstrated that quantum mechanics admits a real formulation based on a real Kähler space—that is, a real Hilbert space equipped with a symplectic structure. For various discussions on whether real formulations reproduce the predictions of standard complex quantum mechanics, see
\cite{Stueckelberg:1960ejx,Strocchi66,Kozlov,VilelaMendes:2002ejy,Strocchi,Moretti:2016csf,Renou:2021dvp,Zhu:2020iml,Hita:2025okv,Volovich:2025rmi,Ying:2025xyl,Arefeva:2025zbx,Bang:2026lbt,Maioli:2026mhy,Chen:2021ril,Li:2021uof,2103.12740,Bednorz:2022zgq,Vedral:2023pij,Takatsuka:2025hez,Sarkar:2025tmd}.

In complex quantum mechanics, the general form of the evolution equation is given by the Schr\"odinger equation
\begin{equation}
i\dot{\psi}=H\psi,
\end{equation}
where $H$ is a self-adjoint operator on a separable Hilbert space. Stone's theorem relates such $H$ as the generator of a strongly continuous one-parameter unitary group. The resulting unitary time evolution preserves probabilities. In this paper, we prove an analogue of Stone's theorem for real quantum mechanics and obtain the general form of the evolution equation
\begin{equation}
\label{GEE}
\mathcal{J}\dot{\phi}=\mathcal{H}\phi(t).
\end{equation}
Here, $\phi(t)\in\mathcal{K}$, where $\mathcal{K}$ is a real K\"ahler space; $\mathcal{J}$ is the $2\times 2$ symplectic matrix satisfying $\mathcal{J}^{2}=-\mathbb I$; and $\mathcal{H}$ is a self-adjoint operator that commutes with $\mathcal{J}$
\begin{equation}
    [ \mathcal{H,J} ] = 0.
\end{equation}
Equation \eqref{GEE} is the real analogue of the Schr\"odinger equation on a complex Hilbert space. Since the corresponding evolution operator preserves both the inner product and the symplectic form on the real K\"ahler space, it is both orthogonal and symplectic.

The real quantum mechanics can be obtained from the usual complex formulation by representing multiplication by $i$ through the action of $\mathcal{J}$. In particular, in units where $\hbar=1$, the Heisenberg commutation relation is replaced by the $\mathcal{J}$-quantization rule 
\begin{equation}
[Q,P]=\mathcal{J}.
\end{equation}
For earlier discussions of time evolution in real quantum mechanics and real quantum field theory, see \cite{Stueckelberg:1960ejx,Strocchi66,VilelaMendes:2002ejy,Volovich:2025rmi,Arefeva:2026xlr,Passon:2026ags,Singh:2026dtp,Takatsuka:2025hez,Sarkar:2025tmd}.

In this paper, we formulate scattering theory in real quantum mechanics using the general evolution equation derived above. The wave operators in the real quantum mechanics are defined as
% \begin{equation}
% \Omega_\pm^{\mathbb{R}}
% =
% s-\lim_{t\to\pm\infty}
% e^{-t\mathcal{J}\mathcal{H}0}
% e^{t\mathcal{J}\mathcal{H}},
% \end{equation}
\be
\Omega_\pm^{\mathbb{R}}
=\mathop{\mathrm{s\text{-}lim}}_{t\to\pm\infty}e^{-t\mathcal{J}\mathcal{H}0}
e^{t\mathcal{J}\mathcal{H}},\ee
where $\mathcal{H}$ and $\mathcal H_0$ is a pair of self adjoint operators. We adapt Cook's method from scattering theory on complex Hilbert spaces to the real setting and prove the existence of the wave operators under natural assumptions. Under the small-Rollnik hypothesis, we further establish their asymptotic completeness. The associated scattering operator is then given by
\begin{equation}
S_{\mathbb{R}}
=
\left(\Omega_+^{\mathbb{R}}\right)^{*}
\Omega_-^{\mathbb{R}}.
\end{equation}
For central potentials, we show that the scattering cross section obtained in real quantum mechanics coincides with the corresponding cross section in standard complex quantum mechanics. 

In standard quantum mechanics, consideration of tunneling requires complex variables. We show that the real formulation, expressed entirely in terms of real variables, yields the same tunneling probability as standard complex quantum mechanics. Thus, symplectic dynamics on the real K\"ahler space reproduces quantum transmission through a classically forbidden barrier.

Note that there exist also another motivation for real quantum theory. it was proved \cite{IVpadic} that any quantum dynamical system is integrable and equivalent to a system of classical harmonic oscillators, which is the real dynamical system. This material is presented in the appendix.
 
The paper is organized as follows. Section 2 introduces the realification of the Schr\"odinger equation and discusses the finite dimensional case. Section 3 presents the spectral theorem for a real K\"ahler space. Section 4 establishes a real analogue of Stone's theorem and derives the general form of the evolution equation. Section 5 studies the extension of the wave function into a classically forbidden region and compares the result with standard complex quantum mechanics. Section 6 analyzes quantum tunneling in the real K\"ahler formulation. Section 7 develops scattering theory, establishes the existence and asymptotic completeness of the real wave operators under the stated assumptions, and shows that, for a class of potentials considered, the differential cross sections is equivalent with those obtained in standard complex quantum mechanics.

\section{Real Quantum Mechanics}

In this section we discuss the mapping of the formulations of the Schrodinger equation in complex Hilbert space to the real K\"ahler space. This procedure will be called realification. Here we breifly remind the definition of the real K\"ahler space which is real Hilbert space equipped with the symplectic structure.  The Real K\"ahler space is a quadruplet case $\mathcal{K}=\{\mathcal{V}=\mathbb{H}\oplus\mathbb{H},g,\mathcal{J},\omega\}$, here $\mathbb{H}$ is a real Hilbert space, $g$ is a inner product on $\mathcal{V}$, $\omega(.,.)=g(J.,.)$ is symplectic form on $\mathcal{V}$.

Consider the Schrödinger equation in a complex Hilbert space $\mathbb{L}^2(X,d\mu)$
\begin{equation}
    i\dot{\psi}=\widehat{H}\psi.
\end{equation}
with $\psi$ belong to the measurable space .
We introduce the operation of realification by mapping of $\psi=U+iV$ , where $U,V$ are real valued functions, into  $ \phi
:=(U,V)^T.$ Under realification, multiplication by \(i\) is represented by the
real matrix
\begin{equation}
    \mathcal{R}(i)=\mathcal{J},
    \qquad
    \mathcal{J}= \begin{pmatrix}
        0 & -\mathbbm{1}\\\mathbbm{1}&0
    \end{pmatrix},
\end{equation}
which satisfies $\mathcal{J}^2=- \mathbbm{1}$. 
If $\mathcal{R}(\widehat H)=\mathcal{H}$ then the realification of the Schrödinger equation takes the form
\begin{equation}
    \mathcal{J}\dot{\phi}=\mathcal{H}\phi.
\end{equation}

\subsection{Finite-Dimensional Case}

The finite-dimensional case is particularly important in quantum
information theory \cite{Ohya}. In this subsection we show that  more general Schr\"odinger equation in finite dimensional Hilbert space is equivalent to the classical Hamiltonian for the harmonic oscillator. Consider the Hilbert space $\mathbb{C}^{N}$, with Hamiltonian $\widehat H=H_{ab}$, where $a,b=1,\ldots,N$ , in the form of Hermitian matrix. Any Hermitian matrix can be written in the form
\begin{equation}
    H_{ab}=K_{ab}+iA_{ab},
\end{equation}
where $K_{ab}$ real symmetric matrix and $A_{ab}$ is a real antisymmetric matrix. Consider the Schrödinger equation where is summation over repeated indices is assumed
\begin{equation}
    i\dot{\psi}_a=H_{ab}\psi_b.
    \end{equation} 
Now, we substitute ${H}_{ab} =K_{ab}+iA_{ab}$ and $\psi_a=U_a+iV_a$, giving us 
\begin{equation}
    i\dot U_a-\dot V_a=K_{ab}U_a+iK_{ab}V_a+iA_{ab}U_a -A_{ab}V_a
\end{equation}
Equating the real and imaginary parts gives
\begin{align}
    \dot U_a &=A_{ab}U_b+K_{ab}V_b,
    \label{eq:U-dynamics}\\
    \dot V_a &=-K_{ab}U_b+A_{ab}V_b.
    \label{eq:V-dynamics}
\end{align}
We may compare these equations with the realified form.
Now let us proof that these Schrödinger equations are equivalent with the realified Schrödinger equation. 
 \begin{equation}
    \mathcal{J}\dot\phi= \mathcal{H} \phi,
\end{equation}
The realification of the Hermitian matrix is
\begin{equation}
    \mathcal{H}
    =
    \mathcal{R}(\widehat H)
    =
    \begin{pmatrix}
        K & -A\\
        A & K
    \end{pmatrix}.
\end{equation}
Since \(\mathcal{J}^{-1}=-\mathcal{J}\), the realified equation becomes
\begin{align}
    \dot{\phi}
    =\begin{pmatrix}
    \dot U\\\dot V
\end{pmatrix} &=
    -\mathcal{J}\mathcal{H}\phi
    \nonumber\\
    &=\begin{pmatrix}
        A & K\\
        -K & A
    \end{pmatrix}
    \begin{pmatrix}
        U\\
        V
    \end{pmatrix}=\begin{pmatrix}
        AU+KV \\-KU+AV
    \end{pmatrix}.
\end{align}
    
Moreover, defining the canonical coordinates \(q_a:=U_a\) and \(p_a:=V_a\), the equations of motion take the canonical Hamiltonian form
\begin{equation}
\dot q_a=\frac{\partial H}{\partial p_a}
=K_{ab}p_b+A_{ab}q_b,
\qquad
\dot p_a=-\frac{\partial H}{\partial q_a}
=-K_{ab}q_b+A_{ab}p_b.
\end{equation}
Integrating Hamilton’s equations yields, up to a dynamically irrelevant additive constant,
\begin{equation}
H(q,p)=\frac12K_{ab}\left(q_aq_b+p_ap_b\right)
+p_aA_{ab}q_b.
\end{equation}
Thus, the evolution of the finite-dimensional quantum system is equivalent to that of a classical system governed by this quadratic Hamiltonian.

\section{Spectral theorem in a real K\"ahler Hilbert space}
\label{sec:KahlerSpectral}

The spectral theorem for unbounded self-adjoint operators on real Hilbert
spaces, together with its compatibility with a fixed complex structure, is discussed
 in \cite{Goodrich1972,Moretti:2016csf}.  Here we formulate the result directly in a real
K\"ahler  space.  In finite dimension it reduces to the spectral
decomposition considered in \cite{Arefeva:2025zbx, Arefeva:2026xlr}, while the
projection-valued formulation below also includes continuous spectrum and
unbounded observables.

Let $(\cK,g,\omega,\mathcal J)$ be a real K\"ahler Hilbert space.  Thus
$\|x\|^2=g(x,x)$, and
\be
\omega(x,y)=g(\mathcal Jx,y).
\ee
All adjoints are taken with respect to $g$. We say that a densely defined real-linear operator $\cL$ is self-adjoint and commutes with $\mathcal J$ if
\be
 \cL^*=\cL,\qquad \mathcal JD(\cL)=D(\cL),\qquad \cL \mathcal Jx=\mathcal J\cL x,
 \quad x\in D(\cL).
 \label{eq:KahlerSelfAdjoint}
\ee
For simplicity, the last two conditions will be written as
$\cL\mathcal J=\mathcal J\cL$. 

A Kähler projection valued measure (PVM) is a real orthogonal PVM $E$ on $\mathbb R$ whose projections commute with the complex structure:
\be
E(\Delta)\mathcal J=\mathcal JE(\Delta).
\ee

{\bf Theorem:}
Let $(\cK,g,\omega,\mathcal J)$ be a real K\"ahler  space, and let $\cL:D(\cL)\subset\cK\longrightarrow\cK$ be a densely defined real-linear self-adjoint operator such that $\cL\mathcal J=\mathcal J\cL$.  Then there exists a unique K\"ahler projection-valued measure $E$ on $\mathbb R$ such that
\be
\cL=\int_{\mathbb R}\lambda\,dE(\lambda).
\ee
More precisely, for every $x\in\cK$, define the finite positive Borel
measure
\be
\mu_x(\Delta)
   =g\bigl(x,E(\Delta)x\bigr),
\qquad
\Delta\in{\cal B}(\mathbb R).
\label{eq:KahlerScalarMeasure}
\ee
Then
\bea
D(\cL)
&=&
\left\{
x\in\cK:
\int_{\mathbb R}\lambda^2\,d\mu_x(\lambda)<\infty
\right\},
\label{eq:KahlerSpectralDomain}\\
\|\cL x\|^2
&=&
\int_{\mathbb R}\lambda^2\,d\mu_x(\lambda),
\qquad x\in D(\cL).
\label{eq:KahlerSpectralNorm}
\eea

{\it Proof.}
Applied to the underlying real Hilbert space $(\cK,g)$, the real spectral
theorem \cite[Theorem B.26(b)]{Moretti:2016csf} gives the unique real
orthogonal PVM $E$ and the stated domain and norm identities.  Since
the complex structure $\mathcal J$ of a K\"ahler  space is bounded and
$\mathcal J\cL\subset\cL\mathcal J$,Theorem B.26(c)(ii) of 
\cite{Moretti:2016csf} gives
\be
 E(\Delta)\mathcal J
 =
 \mathcal JE(\Delta)
\ee
for every $\Delta\in{\cal B}(\mathbb R)$.  Hence $E_{\cL}$ is a K\"ahler
PVM, and its uniqueness follows from the uniqueness of the real spectral
PVM. \hfill$\square$

In particular, every spectral subspace $E_{\cL}(\Delta)\cK$ is invariant under $\mathcal J$. Hence every finite-dimensional eigenspace has even real dimension. This recovers the doubling of spectral multiplicities in the real representation found in the finite-dimensional theory.

For bounded normal operators on real Hilbert spaces, a related $C^*$-algebraic formulation of spectral theory was given in \cite{AgrawalKulkarni1994}.

The spectral theorem also has a multiplication-operator formulation. In the separable case, a self-adjoint operator on a real Hilbert space is orthogonally equivalent to an operator of multiplication by a real measurable function on a suitable real $L^2$-space. In the present Kähler setting, the corresponding spectral subspaces are invariant under $\mathcal J$.

The same PVM also defines the real Borel functional calculus.  For every bounded
Borel function $f:\mathbb R\to\mathbb R$, set
\be
 f(\cL)=\int_{\mathbb R}f(\lambda)\,dE_{\cL}(\lambda).
 \label{eq:KahlerFunctionalCalculus}
\ee
The resulting operator is bounded and self-adjoint.  Since every spectral projection commutes with $\mathcal J$, so does $f(\cL)$.

\section{Stone Theorem in a real K\"ahler Hilbert space}

In this section, we formulate Stone's theorem for strongly continuous one-parameter orthogonal groups that preserve a real Kähler structure. We then apply it to the real Kähler formulation of Schrödinger evolution. We first describe the general construction and then discuss its finite and infinite dimensional realizations.

\begin{theorem}
\label{thm:Kahler-Stone}
Let $(\mathcal K,g,\omega,\mathcal J)$ be a real
K\"ahler  space. A family
$\{S_t\}_{t\in\mathbb R}$ of bounded real-linear operators is a
strongly continuous one-parameter group of orthogonal operators
satisfying
\begin{equation}
    S_t\mathcal J=\mathcal J S_t,
    \qquad t\in\mathbb R,
\end{equation}
if and only if
\begin{equation}
    S_t=e^{t\mathcal G},
    \qquad t\in\mathbb R,
\end{equation}
for a unique skew-adjoint operator $\mathcal G$ satisfying
\begin{equation}
    \mathcal J D(\mathcal G)=D(\mathcal G),
    \qquad
    \mathcal G\mathcal J\phi
    =
    \mathcal J\mathcal G\phi,
    \qquad
    \phi\in D(\mathcal G).
\end{equation}
In this case, every \(S_t\) preserves the Kähler structure: it preserves the inner product \(g\), commutes with the complex structure \(\mathcal J\), and therefore also preserves the symplectic form 
\begin{equation}
    \omega(S_t\phi,S_t\xi)
    =
    \omega(\phi,\xi),
    \qquad
    \phi,\xi\in\mathcal K.
\end{equation}
\end{theorem}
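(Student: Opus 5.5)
The plan is to reduce both directions to the real Stone theorem on the underlying real Hilbert space $(\mathcal K,g)$, as was done for the spectral theorem in Section~\ref{sec:KahlerSpectral}. The extra work is then only to track how $\mathcal J$ passes between the group and its generator. For the ``if'' direction, take $\mathcal G$ skew-adjoint with $\mathcal J\mathcal G=\mathcal G\mathcal J$ on $D(\mathcal G)$. I would realify-complexify by setting $\mathcal L:=\mathcal J\mathcal G$ with domain $D(\mathcal G)$. Then $\mathcal J$-invariance of the domain, $\mathcal J^*=-\mathcal J$ and $\mathcal J^2=-\mathbbm 1$ give
\[
\mathcal L^*=\mathcal G^*\mathcal J^*=(-\mathcal G)(-\mathcal J)=\mathcal G\mathcal J=\mathcal J\mathcal G=\mathcal L .
\]
To justify $(\mathcal J\mathcal G)^*=\mathcal G^*\mathcal J^*$, I use that $\mathcal J$ is a bounded bijection preserving $D(\mathcal G)$. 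So $\mathcal L$ is self-adjoint and commutes with $\mathcal J$.

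By the Kähler spectral theorem above, $\mathcal L=\int\lambda\,dE(\lambda)$ with $E$ commuting with $\mathcal J$. Note that $\mathcal G=-\mathcal J\mathcal L$. Because $\mathcal J$ commutes with $E$ and $\mathcal J^2=-\mathbbm 1$, I define
\[
S_t:=\int\cos(t\lambda)\,dE(\lambda)-\mathcal J\int\sin(t\lambda)\,dE(\lambda),
\]
which is the realified form of $\int e^{-it\lambda}dE$. From the functional calculus \eqref{eq:KahlerFunctionalCalculus}, each $S_t$ commutes with $\mathcal J$. Addition formulas for $\cos$ and $\sin$ give the group law, and one checks $S_t^*S_t=\cos^2+\sin^2=\mathbbm 1$, so each $S_t$ is orthogonal. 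Strong continuity and $\frac{d}{dt}S_t\phi|_{t=0}=-\mathcal J\mathcal L\phi=\mathcal G\phi$ for $\phi\in D(\mathcal G)$ follow by dominated convergence in the measure $\mu_\phi$, using \eqref{eq:KahlerSpectralDomain}--\eqref{eq:KahlerSpectralNorm}. This identifies $S_t=e^{t\mathcal G}$.

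For the ``only if'' direction, start from an orthogonal strongly continuous group $S_t$ and let $\mathcal G$ be its generator, $\mathcal G\phi=\lim_{t\to0}t^{-1}(S_t\phi-\phi)$ on the set where the limit exists. The standard real Stone theorem (or Hille--Yosida together with $S_t^*=S_{-t}$) shows that $\mathcal G$ is densely defined and skew-adjoint. The paper's spectral route needs self-adjointness of $\mathcal J\mathcal G$, so I will instead prove skew-adjointness directly. Skew-symmetry is immediate from $g(S_t\phi,\xi)=g(\phi,S_{-t}\xi)$. Maximality then follows because $\mathbbm 1\mp\mathcal G$ are surjective, via the resolvent $\int_0^\infty e^{-t}S_{\pm t}\,dt$.

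The $\mathcal J$-compatibility comes from $S_t\mathcal J=\mathcal JS_t$. If $\phi\in D(\mathcal G)$, then
\[
t^{-1}(S_t\mathcal J\phi-\mathcal J\phi)=\mathcal J\,t^{-1}(S_t\phi-\phi)\to\mathcal J\mathcal G\phi,
\]
using the boundedness of $\mathcal J$. Hence $\mathcal J\phi\in D(\mathcal G)$ and $\mathcal G\mathcal J\phi=\mathcal J\mathcal G\phi$. Applying this to $\mathcal J^{-1}=-\mathcal J$ gives the equality $\mathcal JD(\mathcal G)=D(\mathcal G)$. Uniqueness holds because a $C_0$-group determines its generator. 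Finally, preservation of $\omega$ is a one-line computation:
\[
\omega(S_t\phi,S_t\xi)=g(\mathcal JS_t\phi,S_t\xi)=g(S_t\mathcal J\phi,S_t\xi)=g(\mathcal J\phi,\xi).
\]

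The main obstacle is the domain bookkeeping in the ``if'' direction. Two points need care: establishing the adjoint identity $(\mathcal J\mathcal G)^*=\mathcal G^*\mathcal J^*$ for unbounded $\mathcal G$, and showing that the spectrally defined group has generator exactly $\mathcal G$, with equal domains rather than mere extension. For the latter I would use the following argument. The spectral generator $\tilde{\mathcal G}$ is skew-adjoint and extends $\mathcal G$. Since $\mathcal G$ is skew-adjoint, $\tilde{\mathcal G}\supset\mathcal G$ implies $\tilde{\mathcal G}=-\tilde{\mathcal G}^*\subset-\mathcal G^*=\mathcal G$, so the two are equal.
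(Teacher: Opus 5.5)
Your proof is correct, but the ``if'' direction takes a genuinely different route from the paper.

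\textbf{The paper's route.} It uses the real Stone theorem as a black box in both directions. For the converse it obtains $S_t\mathcal J=\mathcal JS_t$ by a uniqueness trick: $\mathcal J^{-1}S_t\mathcal J$ is an orthogonal $C_0$-group with generator $\mathcal J^{-1}\mathcal G\mathcal J=\mathcal G$, hence it equals $S_t$.

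\textbf{Your route.} You pass to the self-adjoint operator $\mathcal L=\mathcal J\mathcal G$, which is precisely the Hamiltonian $\mathcal H$ of Section~4.1. You then invoke the K\"ahler spectral theorem of Section~\ref{sec:KahlerSpectral} and build the group explicitly as $\int(\cos t\lambda-\mathcal J\sin t\lambda)\,dE(\lambda)$. Commutation with $\mathcal J$ is therefore inherited from the K\"ahler PVM rather than deduced from uniqueness. In the ``only if'' direction you also prove skew-adjointness of the generator by hand, via skew-symmetry plus surjectivity of $I\mp\mathcal G$ through the Laplace-transform resolvents, instead of citing it. The $\mathcal J$-compatibility of $\mathcal G$, uniqueness, and preservation of $\omega$ are argued exactly as in the paper.

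\textbf{What each buys.} The paper's route is shorter and needs no spectral theory. It also cleanly separates the generic real-Hilbert-space statement from a two-line equivariance argument. Yours costs more domain bookkeeping in two places. First, the identity $(\mathcal J\mathcal G)^*=\mathcal G^*\mathcal J^*$, which holds simply because the bounded factor sits on the left; $D(\mathcal G\mathcal J)=D(\mathcal G)$ then comes from $\mathcal JD(\mathcal G)=D(\mathcal G)$. Second, the maximality argument identifying the spectral generator with $\mathcal G$, which reuses your ``only if'' result. In return your argument is self-contained modulo the spectral theorem. It also yields an explicit representation of $e^{t\mathcal G}=e^{-t\mathcal J\mathcal H}$ as the realified phase $e^{-it\lambda}$, i.e.\ a rotation in the K\"ahler plane on each spectral subspace. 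That makes the link to the general evolution equation $\mathcal J\dot\phi=\mathcal H\phi$ immediate.
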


\textit{Proof:}
By the real Stone theorem, see Theorem A.3 from
\cite{FigotinSchenker2007}, there exists a unique
skew-adjoint operator $\mathcal G$ such that
\begin{equation}
 S_t=e^{t\mathcal G}.
\end{equation}

For $\phi\in D(\mathcal G)$, the boundedness of $\mathcal J$ and the
relation $S_t\mathcal J=\mathcal JS_t$ give
\begin{equation}
 \mathcal G\mathcal J\phi
 =
 \lim_{t\to0}\frac{S_t\mathcal J\phi-\mathcal J\phi}{t}
 =
 \mathcal J\lim_{t\to0}\frac{S_t\phi-\phi}{t}
 =
 \mathcal J\mathcal G\phi.
\end{equation}
Thus $\mathcal JD(\mathcal G)\subseteq D(\mathcal G)$.  Since
$\mathcal J^{-1}=-\mathcal J$, equality follows $\mathcal JD(\mathcal G)=D(\mathcal G).$

Conversely, if $\mathcal G$ is skew-adjoint and commutes with $\mathcal J$,
the real Stone theorem gives $S_t=e^{t\mathcal G}$; since
$\mathcal J^{-1}S_t\mathcal J$ has the same generator $\mathcal G$,
uniqueness implies $S_t\mathcal J=\mathcal JS_t$.
Finally, orthogonality and commutation with $\mathcal J$ imply preservation of symplectic form
\begin{equation}
 \omega(S_t\phi,S_t\xi)
 =
 g(\mathcal JS_t\phi,S_t\xi)
 =
 g(S_t\mathcal J\phi,S_t\xi)
 =g(\mathcal J\phi,\xi)
 =
 \omega(\phi,\xi).
\end{equation}

\hfill$\square$

\subsection{General form of evolution equation}

Let $H_{\mathbb C}$ be a possibly unbounded self-adjoint
Hamiltonian, and let $\mathcal H$ denote its realification. Since
realification preserves adjoints, $\mathcal H$ is self-adjoint. Moreover, since \(D(H_{\mathbb C})\) is a complex vector subspace,
\(iD(H_{\mathbb C})=D(H_{\mathbb C}),\)
which under realification becomes
\(\mathcal JD(\mathcal H)=D(\mathcal H).\) Complex linearity of
$H_{\mathbb C}$,
$H_{\mathbb C}\bigl(i\psi\bigr)
=
iH_{\mathbb C}\psi$, gives
\begin{equation}
\mathcal H\mathcal J\phi
=
\mathcal J\mathcal H\phi, \quad 
\qquad
\phi\in D(\mathcal H).
\end{equation}
Thus $\mathcal H$ commutes with $\mathcal J$

Consider the realified Schr\"odinger equation
\be
 \dot{\phi}(t)=-\mathcal J\mathcal H\phi(t),
 \qquad \phi(0)=\phi_0.
 \label{eq:real-Schrodinger-equation}
\ee
Its generator is $\mathcal G=-\mathcal J\mathcal H$ on
$D(\mathcal G)=D(\mathcal H)$.  Since
$\mathcal H$ is self-adjoint, $\mathcal J$ is skew-ad joint, and$\mathcal H$ commutes with $\mathcal J$,
\be
 \mathcal G^*
 =
 \mathcal H\mathcal J
 =
 \mathcal J\mathcal H
 =
 -\mathcal G,
 \qquad
 \mathcal G\mathcal J=\mathcal J\mathcal G.
\ee
Hence Theorem~\ref{thm:Kahler-Stone} yields the strongly continuous
one-parameter group of orthogonal operators
\be
 S_t=e^{t\mathcal G}=e^{-t\mathcal J\mathcal H},
\ee
which commutes with $\mathcal J$, and preserves
$\omega$.  For every \(\phi_0\in\mathcal K_{\mathbb R}\), the family
\begin{equation}
\phi(t)=S_t\phi_0
\end{equation}defines a strongly continuous evolution. If
\(\phi_0\in D(\mathcal H)\), then \(\phi(t)\) is differentiable and is the unique classical solution of \eqref{eq:real-Schrodinger-equation}.

\begin{remark}
As a simple infinite-dimensional example, let $\mathscr H_{\mathbb C}=L^2(\mathbb R^3;\mathbb C^2),$
\begin{equation}
h=-\Delta+V,\qquad D(h)=H^2(\mathbb R^3),
\end{equation}
with \(V\in L^\infty(\mathbb R^3;\mathbb R)\).
For a bounded external field
\begin{equation}
\mathbf B=(b_1,b_2,b_3),
\qquad
b_i\in L^\infty(\mathbb R^3;\mathbb R),
\end{equation}
consider the Hamiltonian
\begin{equation}
H_{\mathbb C}
=
hI_2-\mu b_i\sigma_i
=
\begin{pmatrix}
h-\mu b_3 & -\mu b_1+i\mu b_2\\
-\mu b_1-i\mu b_2 & h+\mu b_3
\end{pmatrix},
\qquad
\mu\in\mathbb R.
\label{eq:Pauli-simple-Hamiltonian}
\end{equation}
Since \(-\mu b_i\sigma_i\) is bounded and self-adjoint,
\(H_{\mathbb C}\) is self-adjoint on $D(H_{\mathbb C})=H^2(\mathbb R^3;\mathbb C^2).$
Writing
\begin{equation}
H_{\mathbb C}=P+iQ,
\qquad
P=
\begin{pmatrix}
h-\mu b_3&-\mu b_1\\
-\mu b_1&h+\mu b_3
\end{pmatrix},
\qquad
Q=
\mu b_2
\begin{pmatrix}
0&1\\
-1&0
\end{pmatrix},
\end{equation}
we have \(P^*=P\) and \(Q^*=-Q\) on
\(L^2(\mathbb R^3;\mathbb R^2)\). Moreover, \(Q\) is bounded and $D(P)=H^2(\mathbb R^3;\mathbb R^2).$ By the realification formula established above,
\begin{equation}
\mathcal H=
\begin{pmatrix}
P&-Q\\
Q&P
\end{pmatrix},
\qquad
\mathcal G=-\mathcal J\mathcal H
=
\begin{pmatrix}
Q&P\\
-P&Q
\end{pmatrix},
\qquad
D(\mathcal G)=D(P)\oplus D(P).
\end{equation}
Since \(Q\) is bounded,
\(D(\mathcal G^*)=D(\mathcal G)\), and
\begin{equation}
\mathcal G^*
=
\begin{pmatrix}
-Q&-P\\
P&-Q
\end{pmatrix}
=
-\mathcal G.
\end{equation}
Thus \(\mathcal G\) is skew-adjoint. By construction, $\mathcal G\mathcal J=\mathcal J\mathcal G$.
Consequently, Theorem~\ref{thm:Kahler-Stone} yields the strongly continuous orthogonal group
\begin{equation}
S_t=e^{t\mathcal G}=e^{-t\mathcal J\mathcal H},
\end{equation}
which commutes with \(\mathcal J\) and therefore preserves the full K\"ahler structure.
\end{remark}

\section{Extension into a Classically Forbidden Region}

In this section, we examine the evanescent penetration of a stationary scattering solution into a classically forbidden region in the real Kähler formulation. For a semi-infinite potential step,  we are showing that the phenomena of squeezing into a classically forbidden region also possible in the formulation using real number.

Consider a quantum particle moving on the real line in a real valued
potential $V(x)$. In the standard complex formulation, its Hamiltonian
acts on $L^2(\mathbb R;\mathbb C)$ and, in units where $\hbar=1$, is
\begin{equation}
    \widehat H
    =
    -\frac{1}{2m}\frac{d^2}{dx^2}
    +
    V(x).
    \label{eq:step-complex-Hamiltonian}
\end{equation}
Since $\widehat H$ has real coefficients, its realification on
$L^2(\mathbb R;\mathbb R^2)$ is
 $\mathcal H
    =
    \begin{pmatrix}
        \widehat H&0\\
        0&\widehat H
    \end{pmatrix}.$

For a state of definite energy $E$, we use the stationary ansatz
\begin{equation}
    \phi(t,x)
    =
    e^{-Et\mathcal J}\phi_E(x),
    \qquad
    \phi_E(x)
    =
    \begin{pmatrix}
        u_E(x)\\
        v_E(x)
    \end{pmatrix}.
    \label{eq:real-stationary-ansatz}
\end{equation}
Substituting this ansatz into
$\partial_t\phi=-\mathcal J\mathcal H\phi$ and using
$[\mathcal H,\mathcal J]=0$ gives
\begin{equation}
    -E\mathcal J e^{-Et\mathcal J}\phi_E
    =
    -\mathcal J e^{-Et\mathcal J}\mathcal H\phi_E,
\end{equation}
shows that $\phi_E$ satisfies the eigenvalue equation 
\begin{equation}
    \mathcal H\phi_E=E\phi_E.
\end{equation}
Thus the realified stationary Schr\"odinger equation is
\begin{equation}
    -\frac{1}{2m}\phi_E''(x)
    +
    V(x)\phi_E(x)
    =
    E\phi_E(x).
    \label{eq:real-stationary-step-equation}
\end{equation}

  Because $\mathcal H=\widehat H\oplus\widehat H$, the two real
components $u_E$ and $v_E$ independently satisfy the same scalar
stationary Schr\"odinger equation. Consequently, every scalar
eigenfunction $f_E$ of $\widehat H$ gives two linearly independent real
eigenvectors,
\begin{equation}
    \begin{pmatrix}f_E\\0\end{pmatrix},
    \qquad
    \begin{pmatrix}0\\f_E\end{pmatrix},
\end{equation}
with the same eigenvalue $E$. Thus the multiplicity of each eigenvalue
is doubled in the real representation, as discussed in
\cite{Arefeva:2026xlr}. Since the component equations have the same form as the
usual stationary Schr\"odinger equation, the standard potential-step
and tunnelling calculations can be performed componentwise using real
variables.

The local inner product defines the probability density in the real formulation
\begin{equation}
    \rho(t,x)
    =
    g(\phi(t,x),\phi(t,x))_x
    =
    U(t,x)^2+V(t,x)^2.
    \label{eq:real-probability-density}
\end{equation}
Under the identification \(\psi=u+iv\), this expression is precisely \(\rho=|\psi|^2\).
For a state of definite energy, the operator $e^{-Et\mathcal J}$ acts at each point $x$ as an orthogonal transformation. Therefore,
\begin{equation}
    \rho(t,x)
    =
    g(\phi(t,x),\phi(t,x))_x
    =
    g(\phi_E(x),\phi_E(x))_x.
\end{equation}
Thus the probability density of the stationary state is independent
of time.For a normalized state, integrating \(\rho\) over a region gives the probability of finding the particle in that region. The stationary scattering solutions considered below are generalized
eigenfunctions and are not normalizable on the full real line.
Consequently, their reflection and transmission properties are
characterized by ratios of probability currents.

In the real K\"ahler formulation, the probability current is obtained
from the local symplectic form:
\begin{equation}
    j(t,x)
    =
    \frac{1}{m}
    \omega_x\left(
        \phi(t,x),
        \frac{\partial\phi(t,x)}{\partial x}
    \right),
    \label{eq:real-probability-current}
\end{equation}
Since $\omega_x(\phi,\xi)
    =
    g(\mathcal J\phi,\xi)_x,$ the current takes the explicit form
\begin{equation}
    j(t,x)
    =
    \frac{1}{m}
    \left(
        U(t,x)\frac{\partial V(t,x)}{\partial x}
        -
        V(t,x)\frac{\partial U(t,x)}{\partial x}
    \right).
\end{equation}
This is exactly the usual complex probability current. Moreover,  probability conservation is expressed by the continuity equation
\begin{equation}
\frac{\partial}{\partial t}g_x(\phi,\phi)+\frac{\partial}{\partial x}\left(\frac{1}{m}\omega_x(\phi,\frac{\partial\phi(t,x)}{\partial x})\right)=0.
\end{equation}

For the stationary state~\eqref{eq:real-stationary-ansatz},
\begin{equation}
    \frac{\partial\phi(t,x)}{\partial x}
    =
    e^{-Et\mathcal J}\phi_E'(x).
\end{equation}
Since $e^{-Et\mathcal J}$ preserves the local symplectic form, it follows that
\begin{equation}
\begin{aligned}
    j(t,x)
    &=
    \frac{1}{m}
    \omega_x\left(
        e^{-Et\mathcal J}\phi_E(x),
        e^{-Et\mathcal J}\phi_E'(x)
    \right)\\
    &=
    \frac{1}{m}
    \omega_x\left(
        \phi_E(x),
        \phi_E'(x)
    \right).
\end{aligned}
\end{equation}
Thus the probability current of a stationary state with definite energy is also independent of time.

We now apply this formulation to a potential step and examine the
extension of a stationary state into a classically forbidden region.
Let
\begin{equation}
    V(x)
    =
    \begin{cases}
        0,   & x<0,\\
        V_0, & x>0,
    \end{cases}
    \qquad
    0<E<V_0.
    \label{eq:real-potential-step}
\end{equation}
The region \(x<0\) is classically allowed, whereas \(x>0\) is
classically forbidden. For \(x<0\), the Schr\"dinger equation becomes
\begin{equation}
    \phi_E''+k^2\phi_E=0,
    \qquad
    k=\sqrt{2mE}.
    \label{eq:real-allowed-equation}
\end{equation}
Since $e^{kx\mathcal J}
    =
    I\cos(kx)+\mathcal J\sin(kx),$ the solution can be written as
\begin{equation}
    \phi_{\mathrm I}(x)
    =
    e^{kx\mathcal J}\mathbf a
    +
    e^{-kx\mathcal J}\mathbf b,
    \qquad
    x<0.
    \label{eq:real-allowed-solution}
\end{equation}
Here \(\mathbf a,\mathbf b\in\mathbb R^2\) are the amplitude vectors. The factors \(e^{kx\mathcal J}\) and
\(e^{-kx\mathcal J}\) are the real representations of \(e^{ikx}\)
and \(e^{-ikx}\), respectively.

For \(x>0\), equation~\eqref{eq:real-stationary-step-equation}
becomes
\begin{equation}
    \phi_E''-\kappa^2\phi_E=0,
    \qquad
    \kappa
    =\sqrt{2m(V_0-E)}.
    \label{eq:real-forbidden-equation}
\end{equation}
Its general solution is
\begin{equation}
    \phi_{\mathrm {II}}(x)
    =
    e^{\kappa x}\mathbf c_+
    +
    e^{-\kappa x}\mathbf c_-.
    \label{eq:real-forbidden-general-solution}
\end{equation}
Boundedness as \(x\to+\infty\) requires
\(\mathbf c_+=0\). Hence
\begin{equation}
    \phi_{\mathrm {II}}(x)
    =
    e^{-\kappa x}\mathbf c,
    \qquad
    x>0,
    \label{eq:real-evanescent-solution}
\end{equation}
where \(\mathbf c\in\mathbb R^2\) is the forbidden region amplitude.

Because the potential has only a finite jump at \(x=0\), contains no delta-function contribution, and the particle mass is constant, integrating the stationary Schrödinger equation across the boundary shows that both \(\phi_E\) and \(\phi_E'\) are continuous at \(x=0\).
\begin{equation}
    \phi_{\mathrm I}(0)
    =
    \phi_{\mathrm {II}}(0),
    \qquad
    \phi_{\mathrm I}'(0)
    =
    \phi_{\mathrm {II}}'(0).
    \label{eq:real-step-boundary-conditions}
\end{equation}
Matching the solutions at the boundary gives the following relations between the amplitudes:
\begin{equation}
    \mathbf a+\mathbf b=\mathbf c,
     \qquad 
    k\mathcal J(\mathbf a-\mathbf b)
    =
    -\kappa\mathbf c.
\end{equation}
Eliminating \(\mathbf b\)gives the forbidden region amplitude 
\begin{equation}
    \mathbf c
    =
    \mathsf P\mathbf a,
    \qquad
    \mathsf P
    =
    \frac{2k}{k^2+\kappa^2}
    \left(
        kI-\kappa\mathcal J
    \right).
    \label{eq:real-penetration-matrix}
\end{equation}
Using \(\mathbf b=\mathbf c-\mathbf a\), the reflected amplitude is
\begin{equation}
    \mathbf b
    =
    \mathsf R\mathbf a,
    \qquad
    \mathsf R
    =
    \frac{
        (k^2-\kappa^2)I
        -
        2k\kappa\mathcal J
    }{
        k^2+\kappa^2
    }.
    \label{eq:real-reflection-matrix}
\end{equation}
The matrix \(\mathsf R\) is orthogonal. Therefore,
\begin{equation}
    \|\mathbf b\|
    =
    \|\mathbf a\|.
    \label{eq:amplitude-norm-equality}
\end{equation}

In the allowed region, the derivative of the stationary state is
\begin{equation}
    \phi_{\mathrm I}'(x)
    =
    k\mathcal J e^{kx\mathcal J}\mathbf a
    -
    k\mathcal J e^{-kx\mathcal J}\mathbf b.
\end{equation}
Substitution into the probability current gives
\begin{equation}
\begin{aligned}
    j_{\mathrm I}
    &=
    \frac{1}{m}
    \omega_x\left(
        \phi_{\mathrm I},
        \phi_{\mathrm I}'
    \right)\\
    &=
    \frac{k}{m}
    \left(
        \|\mathbf a\|^2-\|\mathbf b\|^2
    \right)=0.
\end{aligned}
\end{equation}
Because $|\mathbf b|=|\mathbf a|$, the net probability current in the allowed region vanishes.
The probability density in this region is
\begin{equation}
\begin{aligned}
    \rho_{\mathrm I}(x)
    &=
   g(
        \phi_{\mathrm I}(x),
        \phi_{\mathrm I}(x)
    )_x\\
    &=
    \|\mathbf a\|^2+\|\mathbf b\|^2
    +
    2g(
        e^{kx\mathcal J}\mathbf a,
        e^{-kx\mathcal J}\mathbf b
    )_x.
\end{aligned}
\end{equation}
The last term describes interference between the incident and reflected wave components.

In the forbidden region, the probability density is
\begin{equation}
\rho_{\mathrm {II}}(x)
    =
    g(
        \phi_{\mathrm {II}}(x),
        \phi_{\mathrm {II}}(x)
    )_x=
    e^{-2\kappa x}\|\mathbf c\|^2.
\end{equation}
Therefore, the probability density is nonzero immediately beyond the
step but decreases exponentially as $x$ increases.

Since
\begin{equation}
    \phi_{\mathrm {II}}'(x)
    =
    -\kappa\phi_{\mathrm {II}}(x),
\end{equation}
the probability current in the forbidden region is
\begin{equation}
\begin{aligned}
    j_{\mathrm {II}}
    &=
    \frac{1}{m}
    \omega_x\left(
        \phi_{\mathrm {II}},
        \phi_{\mathrm {II}}'
    \right)\\
    &=
    -\frac{\kappa}{m}
    \omega_x\left(
        \phi_{\mathrm {II}},
        \phi_{\mathrm {II}}
    \right)\\
    &=
    0.
\end{aligned}
\end{equation}
Thus the state extends into the classically forbidden region, although
it carries no probability current there. The potential step therefore
produces complete reflection together with an exponentially decreasing
probability density beyond the boundary.

\section{Tunneling in the Real Kähler Formulation}

In this section, we consider tunneling through a finite rectangular potential barrier in the real Kähler formulation. Although the state is not expressed using complex valued coordinates, the resulting transmission coefficient agrees with that obtained in the standard complex formulation.

The finite barrier is defined by
\begin{equation}
    V(x)
    =
    \begin{cases}
        0,   & x<0,\\
        V_0, & 0\leq x\leq L,\\
        0,   & x>L,
    \end{cases}
    \qquad
    0<E<V_0.
    \label{eq:finite-barrier}
\end{equation}
The regions $x<0$ and $x>L$ are classically allowed, while
$0<x<L$ is classically forbidden. The  stationary spatial solution in the
first region is
\begin{equation}
    \phi_{\mathrm I}(x)
    =
    e^{kx\mathcal J}\mathbf a
    +
    e^{-kx\mathcal J}\mathbf b,
    \qquad
    k=\sqrt{2mE}.
    \label{eq:finite-barrier-region-I}
\end{equation}
Here $\mathbf a$ and $\mathbf b$ are the real amplitude vectors of the right-moving incident wave and the left-moving reflected wave, respectively.

Inside the barrier, the stationary equation is
\begin{equation}
    \phi_{\mathrm {II}}''-\kappa^2\phi_{\mathrm {II}}=0,
    \qquad
    \kappa=\sqrt{2m(V_0-E)}.
\end{equation}
Since this region is finite, both exponential solutions must be
retained:
\begin{equation}
    \phi_{\mathrm {II}}(x)
    =
    e^{\kappa x}\mathbf c
    +
    e^{-\kappa x}\mathbf d,
    \qquad
    0<x<L.
    \label{eq:finite-barrier-region-II}
\end{equation}
Here $\mathbf c$ and $\mathbf d$ are the real amplitude vectors of the two linearly independent exponential solutions.

In the third region, we impose the scattering condition that no wave is incident from the right. Consequently, the solution contains only a right-moving transmitted wave:
\[
\phi_{\mathrm{III}}(x)=
e^{k(x-L)\mathcal J}\mathbf f,
\qquad x>L,
\label{eq:finite-barrier-region-III}
\]
where \(\mathbf f\) is the transmitted amplitude vector.

Continuity of the state and its derivative at $x=0$ gives
\begin{equation}
    \mathbf a+\mathbf b
    =
    \mathbf c+\mathbf d,
    \qquad
    k\mathcal J(\mathbf a-\mathbf b)
    =
    \kappa(\mathbf c-\mathbf d).
    \label{eq:finite-barrier-boundary-zero}
\end{equation}
At $x=L$, continuity of the stationary spatial state and its derivative gives
\begin{equation}
    e^{\kappa L}\mathbf c
    +
    e^{-\kappa L}\mathbf d
    =
    \mathbf f,
\end{equation}
\begin{equation}
    \kappa
    \left(
        e^{\kappa L}\mathbf c
        -
        e^{-\kappa L}\mathbf d
    \right)
    =
    k\mathcal J\mathbf f.
    \label{eq:finite-barrier-boundary-L}
\end{equation}
Solving the conditions at $x=L$ for $\mathbf c$ and $\mathbf d$ gives
\begin{equation}
    \mathbf c
    =
    \frac{e^{-\kappa L}}{2}
    \left(
        I+\frac{k}{\kappa}\mathcal J
    \right)\mathbf f,
\end{equation}
\begin{equation}
    \mathbf d
    =
    \frac{e^{\kappa L}}{2}
    \left(
        I-\frac{k}{\kappa}\mathcal J
    \right)\mathbf f.
\end{equation}
Substituting these expressions for $\mathbf c$ and $\mathbf d$ into the matching conditions at $x = 0$ and eliminating $\mathbf b$ gives
\begin{equation}
    \mathbf a
    =
    \left[
        \cosh(\kappa L)I
        +
        \frac{\kappa^2-k^2}{2k\kappa}
        \sinh(\kappa L)\mathcal J
    \right]\mathbf f.
    \label{eq:incident-transmitted-relation}
\end{equation}
Since $\mathcal J^2=-I$, a matrix of the form
$\alpha I+\beta\mathcal J$ has the inverse
\begin{equation}
    \left(\alpha I+\beta\mathcal J\right)^{-1}
    =
    \frac{\alpha I-\beta\mathcal J}{\alpha^2+\beta^2}.
\end{equation}
Therefore,
\begin{equation}
    \mathbf f=\mathsf T_L\mathbf a,
\end{equation}
where $\mathsf T_L$ is the transmission amplitude operator given by
\begin{equation}
    \mathsf T_L
    =
    \frac{
        \cosh(\kappa L)I
        -
        q\sinh(\kappa L)\mathcal J
    }{
        \cosh^2(\kappa L)
        +
        q^2\sinh^2(\kappa L)
    },
    \qquad
    q=\frac{\kappa^2-k^2}{2k\kappa}.
    \label{eq:finite-transmission-matrix}
\end{equation}

We now use this amplitude relation to determine the physically
measurable transmission coefficient. Because the potential vanishes
in both asymptotic regions, the incident and transmitted waves have
the same wave number. Their
probability currents are therefore
\begin{equation}
    j_{\mathrm{in}}
    =
    \frac{k}{m}\|\mathbf a\|^2,
    \qquad
    j_{\mathrm{tr}}
    =
    \frac{k}{m}\|\mathbf f\|^2.
\end{equation}
Consequently,
\begin{equation}
    T
    =
    \frac{j_{\mathrm{tr}}}{j_{\mathrm{in}}}
    =
    \frac{\|\mathbf f\|^2}{\|\mathbf a\|^2}.
    \label{eq:transmission-current-ratio}
\end{equation}

The amplitude relation $\mathbf f=\mathsf T_L\mathbf a$ gives
\begin{equation}
    \|\mathbf f\|^2
    =
    g\left(
        \mathbf a,
        \mathsf T_L^*\mathsf T_L\mathbf a
    \right).
\end{equation}
Thus, the transmitted to incident norm ratio is determined by the
positive operator $\mathsf T_L^*\mathsf T_L$. Direct multiplication
gives
\begin{equation}
    \mathsf T_L^*\mathsf T_L
    =
    \frac{1}{
        \cosh^2(\kappa L)
        +
        q^2\sinh^2(\kappa L)
    }I.
\end{equation}
Since this operator is proportional to the identity, the transmission
coefficient is independent of the direction of the incident amplitude
vector. Substituting the preceding identity into
Eq.~\eqref{eq:transmission-current-ratio} yields
\begin{equation}
\begin{aligned}
    T
    &=
    \frac{1}{
        \cosh^2(\kappa L)
        +
        q^2\sinh^2(\kappa L)
    }=
    \left[
        1+
        \frac{(k^2+\kappa^2)^2}{4k^2\kappa^2}
        \sinh^2(\kappa L)
    \right]^{-1}.
\end{aligned}
\label{eq:finite-barrier-transmission}
\end{equation}
Since
\begin{equation}
    k^2=2mE,
    \qquad
    \kappa^2=2m(V_0-E),
\end{equation}
this may also be written as
\begin{equation}
    T
    =
    \left[
        1+
        \frac{V_0^2}{4E(V_0-E)}
        \sinh^2(\kappa L)
    \right]^{-1}.
    \label{eq:finite-barrier-transmission-energy}
\end{equation}

The reflected wave propagates in the negative $x$-direction and
therefore carries the current
\begin{equation}
    j_{\mathrm{ref}}
    =
    -\frac{k}{m}\|\mathbf b\|^2.
\end{equation}
Its reflection coefficient is consequently
\begin{equation}
    R
    =
    -\frac{j_{\mathrm{ref}}}{j_{\mathrm{in}}}
    =
    \frac{\|\mathbf b\|^2}{\|\mathbf a\|^2}.
\end{equation}

For a stationary scattering state, the probability current is
independent of position. The total current in region I must therefore
equal the transmitted current in region III:
\begin{equation}
    j_{\mathrm{in}}+j_{\mathrm{ref}}
    =
    j_{\mathrm{tr}}.
\end{equation}
Dividing this relation by $j_{\mathrm{in}}$ gives
\begin{equation}
    R+T=1,
\end{equation}
as required by conservation of probability current.

We next examine how the same conserved current is represented inside
the classically forbidden region. Substituting the two evanescent
components of $\phi_{\mathrm{II}}$ into the current gives
\begin{equation}
\begin{aligned}
    j_{\mathrm{II}}
    &=
    \frac{1}{m}
    \omega_x\left(
        \phi_{\mathrm{II}},
        \phi_{\mathrm{II}}'
    \right)\\
    &=
    -\frac{2\kappa}{m}
    \omega_x(\mathbf c,\mathbf d).
\end{aligned}
\end{equation}
Using the expressions for $\mathbf c$ and $\mathbf d$ obtained from
the matching conditions, one finds
\begin{equation}
    \omega_x(\mathbf c,\mathbf d)
    =
    -\frac{k}{2\kappa}\|\mathbf f\|^2.
\end{equation}
It follows that
\begin{equation}
    j_{\mathrm{II}}
    =
    \frac{k}{m}\|\mathbf f\|^2
    =
    j_{\mathrm{tr}}.
\end{equation}

Neither exponential term contributes to the probability current separately. The nonzero current arises entirely from the mixed term involving both exponential components. This is the real Kähler counterpart of the relative phase
between the corresponding complex amplitudes.

Finally, in the wide-barrier regime $\kappa L\gg 1$, one has
$\sinh^2(\kappa L)\simeq e^{2\kappa L}/4$. The transmission
coefficient therefore takes the asymptotic form
\begin{equation}
    T
    \simeq
    \frac{16E(V_0-E)}{V_0^2}
    e^{-2\kappa L}.
\end{equation}
Thus, transmission through a sufficiently wide barrier is
exponentially suppressed but remains nonzero. The real Kähler
formulation therefore reproduces both probability-current
conservation and the characteristic exponential behavior of quantum
tunneling.

\section{Scattering Theory in Real K\"ahler Space}

In this section, we develop scattering theory for evolution equation on a real Kähler space. We first establish the existence of the real wave operators by a real version of Cook's method \cite{Reed:1979ne}. Under the small-Rollnik hypothesis, we then discuss asymptotic completeness and construct the scattering operator, showing that its real Kähler representation reproduces the phase shifts and differential cross sections of the usual complex formulation.

Let $\mathcal{K}=L^2(\mathbb R^3;\mathbb R)\oplus L^2(\mathbb R^3;\mathbb R)$  equipped with standard Kähler structure be the real K\"ahler Hilbert space. Let self adjoint operator $h_0$ have $D(h_0)=H^2(\mathbb R^3;\mathbb R)$ and let \(V\in L^\infty(\mathbb R^3;\mathbb R)\) be a measurable real-valued function. For the Cook method below, assume that
\begin{equation}
D(h)=D(h_0),
\qquad
h=h_0+V,
\end{equation}
and  \(h\) is self-adjoint. 

On the real Kähler space \(\mathcal K\), define the free and interacting Hamiltonians by
\begin{equation}
\mathcal H_0
=
\begin{pmatrix}
h_0&0\\
0&h_0
\end{pmatrix},
\qquad
\mathcal H=\mathcal H_0+\mathcal V
=
\begin{pmatrix}
h&0\\
0&h
\end{pmatrix},
\qquad
\mathcal V
=
\begin{pmatrix}
V&0\\
0&V
\end{pmatrix}.
\label{eq:real-scattering-Hamiltonians}
\end{equation}
Their domains are $D(\mathcal H_0)
=
D(h_0)\oplus D(h_0)$ and $D(\mathcal H)
=
D(h)\oplus D(h)$. 
Consequently, $D(\mathcal H_0)=D(\mathcal H)$.
Because \(h_0\) and \(h\) are self-adjoint, the block operators
\(\mathcal H_0\) and \(\mathcal H\) are self-adjoint on the specified domains.

The repeated diagonal blocks also imply the commutation relations
\begin{equation}
\mathcal H_0\mathcal J
=
\mathcal J\mathcal H_0
,
\qquad
\mathcal H\mathcal J
=
\mathcal J\mathcal H
\end{equation}
Thus, by the real Stone theorem established in Section~4,
\(-\mathcal J\mathcal H_0\) and \(-\mathcal J\mathcal H\) generate the strongly continuous one-parameter groups of orthogonal operators
\begin{equation}
\mathbb U_0(t)
=
e^{-t\mathcal J\mathcal H_0},
\qquad
\mathbb U(t)
=
e^{-t\mathcal J\mathcal H}.
\label{eq:real-scattering-evolution}
\end{equation}
Both groups commute with \(\mathcal J\) and therefore preserve the complete Kähler structure \((g,\omega,\mathcal J)\).

\medskip
\noindent
\textbf{ Real Cook's Method:}
Let $\mathcal D
\subset
D(\mathcal H_0)=D(\mathcal H)$ be dense in \(\mathcal K\). Suppose that, for every
\(\phi\in\mathcal D\),
\begin{equation}
\int_0^\infty
\left(
\left\|
\mathcal V\mathbb U_0(t)\phi
\right\|_{\mathcal K}
+
\left\|
\mathcal V\mathbb U_0(-t)\phi
\right\|_{\mathcal K}
\right)\,dt
<\infty.
\label{eq:real-Cook-condition}
\end{equation}
The first term controls the limit as
\(t\to+\infty\), whereas the second controls the limit as
\(t\to-\infty\). Then the real wave operators
\begin{equation}
\Omega_\pm^{\mathbb R}
:=
\operatorname*{s-lim}_{t\to\pm\infty}
\mathbb U(-t)\mathbb U_0(t)
=
\operatorname*{s-lim}_{t\to\pm\infty}
e^{t\mathcal J\mathcal H}
e^{-t\mathcal J\mathcal H_0}
\label{eq:real-wave-operators}
\end{equation}
exist on \(\mathcal K\).

\textit{Proof.}
For \(\phi\in\mathcal D\), set
\[
F_\phi(t)
:=
\mathbb U(-t)\mathbb U_0(t)\phi.
\]
The free evolution preserves \(D(\mathcal H_0)\), and
\(D(\mathcal H_0)=D(\mathcal H)\). Hence \(F_\phi\) is strongly differentiable. Using
\[
\frac{d}{dt}\mathbb U(-t)
=
\mathbb U(-t)\mathcal J\mathcal H,
\qquad
\frac{d}{dt}\mathbb U_0(t)
=
-\mathcal J\mathcal H_0\mathbb U_0(t),
\]
together with
\(\mathcal H-\mathcal H_0=\mathcal V\), we obtain
\begin{equation}
F_\phi'(t)
=
\mathbb U(-t)\mathcal J\mathcal V
\mathbb U_0(t)\phi.
\label{eq:real-Cook-derivative}
\end{equation}
Since \(\mathbb U(-t)\) and \(\mathcal J\) are orthogonal,
\begin{equation}
\|F_\phi'(t)\|_{\mathcal K}
=
\|\mathcal V\mathbb U_0(t)\phi\|_{\mathcal K}.
\label{eq:real-Cook-derivative-norm}
\end{equation}

For \(t>s>0\), the fundamental theorem of calculus gives
\begin{equation}
\|F_\phi(t)-F_\phi(s)\|_{\mathcal K}
\leq
\int_s^t
\|\mathcal V\mathbb U_0(u)\phi\|_{\mathcal K}\,du.
\label{eq:real-Cook-positive-estimate}
\end{equation}
The first term in \eqref{eq:real-Cook-condition} therefore shows that
\(F_\phi(t)\) is Cauchy as \(t\to+\infty\). Since the negative limit computation follow the same, both limits exist for every
\(\phi\in\mathcal D\). Uniform boundedness extends convergence from the dense set \(\mathcal D\) to all of \(\mathcal K\)

For every \(\psi\in\mathcal K\), $\|\Omega_\pm^{\mathbb R}\psi\|
=
\|\psi\|$. Thus \(\Omega_\pm^{\mathbb R}\) are isometries and
\[
(\Omega_\pm^{\mathbb R})^*\Omega_\pm^{\mathbb R}=I.
\]

Since both evolution groups commute with \(\mathcal J\), Passing to the strong limit gives
\[
\Omega_\pm^{\mathbb R}\mathcal J
=
\mathcal J\Omega_\pm^{\mathbb R}.
\]

For \(\tau\in\mathbb R\), the group property and the change of variable
\(s=t-\tau\) give
\begin{align*}
\mathbb U(\tau)\Omega_\pm^{\mathbb R}
&=
\lim_{t\to\pm\infty}
\mathbb U(\tau-t)\mathbb U_0(t)\\
&=
\lim_{s\to\pm\infty}
\mathbb U(-s)\mathbb U_0(s)\mathbb U_0(\tau)
=
\Omega_\pm^{\mathbb R}\mathbb U_0(\tau).
\end{align*}

Finally, let $\phi\in D(\mathcal H_0)$. From
\begin{equation}
\mathbb U(\tau)\Omega_\pm^{\mathbb R}\phi
=
\Omega_\pm^{\mathbb R}\mathbb U_0(\tau)\phi,
\end{equation}
we obtain
\begin{equation}
\frac{
\mathbb U(\tau)\Omega_\pm^{\mathbb R}\phi
-
\Omega_\pm^{\mathbb R}\phi
}{\tau}
=
\Omega_\pm^{\mathbb R}
\frac{
\mathbb U_0(\tau)\phi-\phi
}{\tau}.
\end{equation}
Since $\phi\in D(\mathcal H_0)$,
\begin{equation}
\lim_{\tau\to0}
\frac{
\mathbb U_0(\tau)\phi-\phi
}{\tau}
=
-\mathcal J\mathcal H_0\phi.
\end{equation}
Therefore, the left-hand side also has a limit. By the definition of the generator of $\mathbb U(\tau)$, this proves that
\begin{equation}
\Omega_\pm^{\mathbb R}\phi\in D(\mathcal H)
\end{equation}
and
\begin{equation}
(-\mathcal J\mathcal H)
\Omega_\pm^{\mathbb R}\phi
=
\Omega_\pm^{\mathbb R}
(-\mathcal J\mathcal H_0)\phi.
\end{equation}
Multiplying both sides by $\mathcal J$ and using
$\mathcal J^2=-I$ and
$\mathcal J\Omega_\pm^{\mathbb R}
=
\Omega_\pm^{\mathbb R}\mathcal J$, we obtain
\begin{equation}
\mathcal H\Omega_\pm^{\mathbb R}\phi
=
\Omega_\pm^{\mathbb R}\mathcal H_0\phi,
\qquad
\phi\in D(\mathcal H_0).
\label{eq:real-wave-Hamiltonian-intertwining}
\end{equation}

Cook's criterion gives the existence of the wave operators under a
time-integrability assumption. To obtain asymptotic completeness, we now
impose an independent smallness condition on the potential.
\paragraph{Small-Rollnik potentials:}A measurable real-valued function \(V\) belongs to the Rollnik class if \cite{Volovich:2020integrability,Sakbaev:2023vfp}
\begin{equation}
\|V\|_{\mathrm R}
:=
\left(
\int_{\mathbb R^3}\int_{\mathbb R^3}
\frac{|V(x)|\,|V(y)|}{|x-y|^2}
\,dx\,dy
\right)^{1/2}
<\infty.
\label{eq:Rollnik-norm}
\end{equation}
For the normalization \(h_0=-\Delta/(2m)\), we assume the
small-Rollnik condition
\begin{equation}
\|V\|_{\mathrm R}<\frac{2\pi}{m}.
\label{eq:small-Rollnik-condition}
\end{equation}

Let us now replace the boundedness hypothesis by the Rollnik hypothesis and redefine \(h\) as the self-adjoint operator defined as the form sum of
\(h_0\) and \(V\), and set
\[
\mathcal H_0=\operatorname{diag}(h_0,h_0),
\qquad
\mathcal H=\operatorname{diag}(h,h).
\]
The classical small-Rollnik result implies that
\eqref{eq:small-Rollnik-condition} guarantees the existence and
completeness of the Schrödinger wave operators; see
\cite{Kato1966,RodnianskiSchlag2004}. Under the realification correspondence
established above, this result gives the real wave operators
\begin{equation}
\Omega_\pm^{\mathbb R}
:=
\operatorname*{s-lim}_{t\to\pm\infty}
\mathbb U(-t)\mathbb U_0(t).
\label{eq:real-Rollnik-wave-operators}
\end{equation}

The real wave operators are orthogonal:
\begin{equation}
(\Omega_\pm^{\mathbb R})^*\Omega_\pm^{\mathbb R}
=
\Omega_\pm^{\mathbb R}(\Omega_\pm^{\mathbb R})^*
=I. \qquad \quad\operatorname{Ran}\Omega_\pm^{\mathbb R}=\mathcal K
\label{eq:real-wave-orthogonality}
\end{equation}
which expresses asymptotic completeness in the present small-Rollnik
setting. They also satisfy
\begin{equation}
\Omega_\pm^{\mathbb R}\mathcal J
=
\mathcal J\Omega_\pm^{\mathbb R},
\qquad
\mathcal H\Omega_\pm^{\mathbb R}
=
\Omega_\pm^{\mathbb R}\mathcal H_0
\quad\text{on }D(\mathcal H_0).
\label{eq:real-Rollnik-wave-properties}
\end{equation}

The scattering operator compares the incoming and outgoing free
asymptotic representatives of the same interacting state. Let
\(\phi_{\mathrm{in}}\in\mathcal K\) be an incoming free state and set
\begin{equation}
\Phi=\Omega_-^{\mathbb R}\phi_{\mathrm{in}}.
\label{eq:incoming-interacting-state}
\end{equation}
By asymptotic completeness, there exists a unique outgoing free state
\(\phi_{\mathrm{out}}\in\mathcal K\) such that
\begin{equation}
\Phi=\Omega_+^{\mathbb R}\phi_{\mathrm{out}}.
\label{eq:outgoing-interacting-state}
\end{equation}
Therefore,
\[
\Omega_+^{\mathbb R}\phi_{\mathrm{out}}
=
\Omega_-^{\mathbb R}\phi_{\mathrm{in}}.
\]
Since \(\Omega_+^{\mathbb R}\) is orthogonal,
\((\Omega_+^{\mathbb R})^{-1}=(\Omega_+^{\mathbb R})^*\), and hence
\begin{equation}
\phi_{\mathrm{out}}
=
\mathbb S_{\mathrm R}\phi_{\mathrm{in}},
\qquad
\mathbb S_{\mathrm R}
:=
(\Omega_+^{\mathbb R})^*\Omega_-^{\mathbb R}.
\label{eq:real-scattering-operator}
\end{equation}
Thus \(\mathbb S_{\mathrm R}\) maps each incoming free asymptotic state
to the corresponding outgoing free asymptotic state.

The properties of the wave operators imply
\begin{align}
\mathbb S_{\mathrm R}^*\mathbb S_{\mathrm R}
&=
\mathbb S_{\mathrm R}\mathbb S_{\mathrm R}^*
=I,
\label{eq:real-scattering-orthogonality}\\
\mathbb S_{\mathrm R}\mathcal J
&=
\mathcal J\mathbb S_{\mathrm R},
\label{eq:scattering-complex-structure}\\
\mathcal H_0\mathbb S_{\mathrm R}
&=
\mathbb S_{\mathrm R}\mathcal H_0
\quad\text{on }D(\mathcal H_0).
\label{eq:scattering-free-energy}
\end{align}
Consequently,
\begin{align}
g(\mathbb S_{\mathrm R}\phi,\mathbb S_{\mathrm R}\xi)
&=g(\phi,\xi),
\label{eq:scattering-metric}\\
\omega(\mathbb S_{\mathrm R}\phi,\mathbb S_{\mathrm R}\xi)
&=\omega(\phi,\xi).
\label{eq:scattering-symplectic-form}
\end{align}
Thus the real scattering operator preserves the metric, symplectic form, complex structure, and free energy.

\subsection{Cross-section}

Since \(\mathbb S_{\mathrm R}\) commutes with \(\mathcal H_0\), scattering conserves the free energy. Therefore, in momentum space, \(\mathbb S_{\mathrm R}\) acts independently on each energy shell. We denote its action at fixed energy by \(\mathbb S_{\mathrm R}(E)\).

For the following stationary calculation, assume that \(V\) is a smooth, finite-range central potential,
\[
V(\mathbf x)=V(r),
\qquad
V(r)=0\quad\text{for }r>R
\]
for some \(R>0\). Under these standard assumptions, the usual stationary partial-wave expansion and large-distance scattering asymptotics apply.

Rotational invariance implies that different angular-momentum channels do not mix. At the fixed energy \(E=k^2/(2m)\), the action in the \(\ell\)-th partial wave is determined by the phase shift \(\delta_\ell(E)\). The magnetic quantum number is unaffected by the central potential. In the usual complex formulation, the corresponding partial-wave factor is
\begin{equation}
S_\ell(E)
=
e^{2i\delta_\ell(E)}.
\label{eq:complex-phase-shift}
\end{equation}
In the real Kähler formulation, the same transformation is represented by
\begin{equation}
\mathbb S_\ell(E)
=
e^{2\delta_\ell(E)\mathcal J}
=
\cos\bigl(2\delta_\ell(E)\bigr)I
+
\sin\bigl(2\delta_\ell(E)\bigr)\mathcal J.
\label{eq:real-phase-shift}
\end{equation}
Thus the usual complex phase shift becomes a rotation through the angle \(2\delta_\ell(E)\) in the real Kähler plane. The corresponding complex scattering amplitude is
\begin{equation}
f(\theta)
=
\frac{1}{2ik}
\sum_{\ell=0}^{\infty}
(2\ell+1)
\bigl(S_{\ell}(E)-1\bigr)
P_{\ell}(\cos\theta).
\label{eq:complex-scattering-amplitude}
\end{equation}
Writing $f(\theta)=f_{1}(\theta)+if_{2}(\theta)$, multiplication by
$f(\theta)$ is represented on the real Kähler plane by
\begin{equation}
\mathbb F(\theta)
=
f_{1}(\theta)I+f_{2}(\theta)\mathcal J
=
\begin{pmatrix}
f_{1}(\theta)&-f_{2}(\theta)\\
f_{2}(\theta)&f_{1}(\theta)
\end{pmatrix}.
\label{eq:real-scattering-amplitude}
\end{equation}

Let $\mathbf e=(1,0)^{T}$ denote a unit incident amplitude. The large-distance form of the stationary scattering state is
\begin{equation}
\phi_{E}(\mathbf{x})
\sim
e^{kz\mathcal J}\mathbf e
+
\frac{e^{kr\mathcal J}}{r}
\mathbb F(\theta)\mathbf e,
\qquad r\longrightarrow\infty.
\label{eq:real-scattering-asymptotics}
\end{equation}
The first term is the incident plane wave and the second is the outgoing spherical wave.

Using the probability current introduced in Sect.~7,
\begin{equation}
j_{i}(\phi)
=
\frac{1}{m}
\omega_{\mathbf{x}}
\left(
\phi,\frac{\partial\phi}{\partial x_{i}}
\right),
\end{equation}
the incident and scattered radial currents are
\begin{equation}
j_{\mathrm{in}}=\frac{k}{m},
\qquad
j_{\mathrm{sc},r}
=
\frac{k}{mr^{2}}
\left|\mathbb F(\theta)\mathbf e\right|^{2}
+o(r^{-2}).
\label{eq:real-scattering-currents}
\end{equation}
Moreover, since $\mathcal J^{T}=-\mathcal J$ and $\mathcal J^{2}=-I$,
\begin{equation}
\mathbb F(\theta)^{T}\mathbb F(\theta)
=
\bigl(f_{1}(\theta)^{2}+f_{2}(\theta)^{2}\bigr)I
=
|f(\theta)|^{2}I.
\label{eq:amplitude-norm-preservation}
\end{equation}
It follows that
\begin{equation}
\frac{d\sigma_{\mathbb R}}{d\Omega}
=
\lim_{r\to\infty}
\frac{r^{2}j_{\mathrm{sc},r}}{j_{\mathrm{in}}}
=
|f(\theta)|^{2}
=
\frac{d\sigma_{\mathbb C}}{d\Omega}.
\label{eq:cross-section-preservation}
\end{equation}
Hence the complex and real Kähler formulations have the same phase shifts and differential cross sections, while complex phase multiplication is represented by rotation in the real Kähler plane.

\section{Conclusion}

In this paper, we developed a framework for quantum dynamics, scattering, and tunneling in real quantum mechanics. By establishing a real analogue of Stone's theorem, we derived the general form of the evolution equation \eqref{GEE} on a real K\"ahler space and showed that the resulting evolution is both orthogonal and symplectic.

We then formulated scattering theory in real quantum mechanics in terms of real wave operators and the associated $S$-matrix. For the class of potentials considered, we showed that the scattering cross section equivalent with that obtained in standard complex quantum mechanics.

% {\bf OLD}: Finally, we analyzed quantum tunneling in the real formulation and proved that the tunneling probability is identical to its complex counterpart. This result demonstrates that real symplectic dynamics can reproduce squeezing into a classically forbidden region without the explicit use of complex-valued wave functions, by using symplectic dynamics. Though in real quantum mechanics dynamics is symplectic like in classical mechanics.

Finally, we analyzed quantum tunneling in the real formulation and proved that the tunneling probability is identical to its complex counterpart. This result shows that real quantum mechanics reproduces penetration into a classically forbidden region without introducing complex-valued wave functions explicitly. Instead, the relevant quantum dynamics is encoded in the symplectic structure of the real Kähler space. Although the evolution in real quantum mechanics is symplectic, as in classical Hamiltonian mechanics, the underlying state space and dynamical structure remain genuinely quantum.

It would also be interesting to investigate another characteristic quantum phenomenon, namely interference, within the real formulation of quantum mechanics. A further important direction is the semiclassical expansion of real quantum mechanics, since its quantum evolution is itself described by a symplectic structure closely analogous to classical Hamiltonian dynamics.

\section*{Acknowledgement}
We would like to thank I.Ya.Aref’eva for  useful discussions. We also would like to thank A.S.Trushechkin,  A.E.Teretenkov and R.Singh for useful discussions at Quantum Mathematical Physics Seminar at Steklov Mathematical Institute.

% It would be interesting to consider another quantum mechanical phenomena, interference, which is typical for complex formulation of quantum mechanics. It would be to investigate semi-classical expansion of real quantum mechanics since in this case we have classical Hamiltonian evolution even in the quantum level.    

\newpage

\appendix
\section{Appendix: Quantum dynamical system}

One can ask a question what a general form of quantum and classical dynamics. We consider a Schrodnger equation of the form 
\begin{equation}
    i\dot\psi=H \psi,
\end{equation}
where H is a self-adjoint operator in a separable Hilbert space $\mathcal{H}$. One can show that any such dynamical system is unitary equivalent to a system of classical harmonic oscillators with the following Hamiltonian
\begin{equation}
\label{Kam}
K=\frac{1}{2}\int_X\lambda(x)(p^2(x)+q^2(x))d\mu(x),
\end{equation}
here $X$ is a mesurable space with a measure $\mu$. This Hamiltonian will generate classical equation of motion
\begin{equation}
\label{Hameq}
    \dot q(x)=\lambda(x)p(x) \qquad
\dot{p}(x)= -\lambda(x)\, q(x),
\end{equation}
which gives
\begin{equation}
\ddot{q}(x) + \lambda(x)^2 q(x) = 0.
\end{equation}
This given in \cite{IVpadic}. 
Other classical dynamical system which defined as a triplet  $(M,\nu,\alpha_t)$ where $M$ is a measurable space, $\nu$ is a measure on $M$, and 
$\alpha_t : M \to M$ is a one-parameter group of measure-preserving transformations. It can be show that this system are equivialent to the classical form \eqref{Kam}.

The Schrodinger equation defined unitary evolution operator $U_t$. If we have a quantum dynamical system $(\mathcal{H},U_t)$, then according to the spectral theorem there exist measurable space $X(\mu)$ and unitary mapping $W: \mathcal{H}\to L^2(d_\mu)$, where $L^2(d_\mu)$ is a Hilbert space of complex numbers integral functions on $X$.
\begin{equation}
W U_t W^{*}  = e^{-it\lambda}, 
\end{equation}
here $\lambda(x)$ is a measurable real valued function on $X$.
One has
\begin{equation}
    e^{-it\lambda}\phi(x)= e^{-it\lambda(x)}\phi(x).
\end{equation}
Let \eqref{Kam} be generator of the unitary group $U_t$, 
\begin{equation}
 U_t=e^{-itH}.
\end{equation}
\begin{equation}
    i\dot\psi=H\psi,
\end{equation} 
where $\psi\in \mathcal{H}$. After unitary transformation $W$ this equaiton become 
\begin{equation}
     i\dot\phi(x)=\lambda(x)\phi(x),
\end{equation}
where $\phi\in L^2$.
Now if we set, $\phi(x)=q(x)+ip(x)$, then the Schrodinger equation get the form of classical equation for harmonic oscillators 
\begin{equation}
     i\dot\phi(x)=\lambda(x)(q(x)+ip(x)),
\end{equation}
\begin{equation}
     i\dot q(x)-\dot p(x)=\lambda(x)(q(x)+ip(x))
\end{equation}
It equivalent to the systems of complex and real parts, we see 
\begin{equation}
    \dot q(x)=\lambda(x)p(x) \qquad\dot p(x)=-\lambda(x)q(x)
\end{equation}
Therefore we obtain classical equation of motion corresponding to the classical Hamiltonian $K$ \eqref{Kam}.
\subsection{Quantum Harmonic oscillator}
Let us consider quantum harmonic oscillator with the Hamiltonian in $L^2(\mathbb{R})$ in the form
\begin{equation}
    H=-\frac{^2}{2m}\frac{d^2}{dx^2}+\frac{1}{2}m\omega^2 x^2
\end{equation}
where eigenfunction
\begin{equation}
\psi_n(x)
=
\left(\frac{m\omega}{\pi}\right)^{1/4}
\frac{1}{\sqrt{2^n n!}}\,
H_n\!\left(\sqrt{\frac{m\omega}{}}\,x\right)
\exp\!\left(-\frac{m\omega x^2}{2}\right),
\qquad n=0,1,2,\dots
\end{equation}
With eigenvalues 
\begin{equation}
    E_n=\omega (n+\frac{1}{2}).
\end{equation}
We represent the solution of the Hamilton equation \eqref{Hameq} in the form 
\begin{equation}
    q(x,t)=\sum_{n=0}^\infty q_n(t)\psi_n(x), \qquad p(x,t)=\sum_{n=0}^\infty p_n(t)\psi_n(x),
\end{equation}
where $q_n(t)$ and $p_n(t)$ satisfy the equation
\begin{equation}
  \dot q_n(t) =^{-1}E_n p_n(t), \qquad \dot p_n(t) = -^{-1}E_n q_n(t), \qquad n=0,1, ...
\end{equation}
So we obtain the systems of harmonic oscillator with the Hamiltonian \begin{equation}
H_{osc}= \frac{1}{2} \sum_{n=0}^\infty E_n(q_n^2+p_n^2).   
\end{equation}
\subsection{Continuous spectrum}
In this subsection we consider an example of integrability of quantum dynamics by using the wave operators.
Let $H$ and $H_0$ be self-adjoint operators on Hilbert space,
\begin{equation}
    H=H_0+V
\end{equation}and
exist the limits
\begin{equation}
\lim_{t \to \pm\infty} e^{itH} e^{-itH_0} = \Omega_{\pm},
\end{equation}
which are called the wave operators. One considers only absolute continuous parts
of the Hamiltonians. The wave operators $\Omega_{\pm}$ obey the intertwining property
$H\Omega_{\pm} = \Omega_{\pm}H_0$. If $H_0$ is a simple ``free'' operator then the existence of
the intertwining relation can be used to claim the integrability of the system described
by the more complicated Hamiltonian $H$. Even if it is so, still there is a problem of
how to compute explicitly the wave operators or the scattering operator ($S$-matrix)
\begin{equation}
S = \Omega_{+}^{*}\Omega_{-}
\end{equation}
which is a task of primary interest.

As an example $L^2( \mathbb{R}^N)$ we consider free hamiltonian
\begin{equation}
    H_0=-\frac{\nabla}{2m}
\end{equation}
and the potential $V(x)$ which vanishing for large $x$. In this case $\Omega_{\pm}$ are unitary operator.

\section{Appendix: Classical dynamical system}
We consider the classical Hamiltonian system in a phase space $\mathbb{R}^{2N}$ with the Hamiltonian 
 \begin{equation}
  H_{cl}= H_{cl}(p,q)
\end{equation}
with canonical equation of motion
\begin{equation}
    \dot q^j=\frac{\partial H_{cl}}{\partial p_j},\qquad
\dot{p}^j= -\frac{\partial H_{cl}}{\partial q_j}, \qquad j=1,...,n.
\end{equation}
This Hamiltonian defines classical dynamical system ($\mathbb{R}^{2N},\nu,\alpha_t$), where $\nu$ is a Lebeg measure $d\nu=dpdq$ with invariant under the Hamiltonian flow $\alpha_t$.

More generally let us consider classical dynamical system ($M,\nu,\alpha_t$), where pair ($M,\nu$) is a measurable space at $\alpha_t$ is a group of automorphism preserved the measure $\nu$. Let us consider the Hilbert space of complex valued functions $L^2(M,\mu)$ and the unitary operator $U_t$ defined by $U_tf(x)=f(\alpha_t,x)$. By the spectral theorem there exist a unitary mapping $W: L^2(M,\nu)\to L^2(X,\mu)$ is a measurable Hilbert space 
\begin{equation}
W U_t W^{*}  = e^{-it\lambda}.
\end{equation}
We can repeat now the application of the spectral theorem as it was done in quantum case and obtained that any classical dynamical system is equivalent to the system of classical Harmonic oscillators with the Hamiltonian $K$ \eqref{Kam}.

\end{document}